\newtheorem{lem}{Lemma}
\newcommand{\defequiv}{\mbox{\raisebox{-.3ex}{$\overset{\vartriangle}{=}$}}}
\newcommand{\norm}[1]{||{#1}||}
\newcommand{\bv}[1]{{\boldsymbol{#1} }}
\newcommand{\script}[1]{{{\cal{#1} }}}
\begin{document}

\title
  {Wireless Peer-to-Peer Scheduling in Mobile Networks}
\author{Michael J. Neely \\ University of Southern California \\
\url{http://www-bcf.usc.edu/~mjneely}
\thanks{The author is with the  Electrical Engineering department at the University
of Southern California, Los Angeles, CA.}
\thanks{This material is supported in part  by one or more of:  the NSF Career grant CCF-0747525,   the 
Network Science Collaborative Technology Alliance sponsored
by the U.S. Army Research Laboratory W911NF-09-2-0053.}}

\markboth{}{Neely}

\maketitle

\begin{abstract}   
This paper considers peer-to-peer scheduling for a network with  
multiple wireless devices.  A subset of the devices are mobile users that desire
specific files.   Each user may already have certain popular files in 
its cache.  The remaining devices are access points that typically have access to 
a larger set of files. Users can download packets of their requested file from
an access point or from a nearby user. 
Our prior work optimizes peer scheduling in a general setting, but the resulting 
delay can be large when applied to mobile networks. 
This paper focuses on the mobile case, and 
develops a new algorithm that reduces delay by opportunistically 
grabbing packets from current neighbors. 
However, it treats a simpler model where each user desires a single file with 
infinite length.  An algorithm that provably optimizes throughput utility while incentivizing participation 
is developed for this case.  The algorithm extends as a simple heuristic 
in more general 
cases with finite file sizes and random active and idle periods. 
\end{abstract}

\section{Introduction} 

Consider a network with $N$ wireless devices.   Let $K$ of these devices be identified as 
\emph{users}, where $K \leq N$.  
The users 
can send packets to each other via direct peer-to-peer transmissions.  Each user has a certain
collection of popular files in its cache.   
The remaining $N-K$ devices are  \emph{access points}.
The access points are connected to a larger network, such as 
the internet, and 
hence typically have access to a larger
set of files.  While a general network may have users that desire to upload
packets to the access points, this paper focuses 
only on the user downloads.  Thus, throughout
this paper it is assumed that the access points only send packets to the users, but do not receive
packets.  In contrast, the users can both send and receive.
The network is mobile, and so the transmission options between access points and users, and between
user pairs, can change over time.

Each user only wishes to download, and does not naturally want to send any data. 
Users will only send data to each other if they agree to operate according to a  control algorithm that 
schedules such transmissions.  This paper assumes the users have already agreed to abide by 
the control algorithm, and thus focuses attention on altruistic network design that optimizes 
a global network utility function.  Nevertheless, 
this paper includes a tit-for-tat constraint, similar 
to the work in \cite{neely-p2p-infocom2011}, which is an effective mechanism for incentivizing participation.  
Further, while this paper develops a single algorithm for optimizing the entire 
network, this does not automatically
require the algorithm to be centralized.  Indeed, 
the resulting algorithm often has a distributed implementation.

The model of this paper applies to a variety of practical network situations.  For example, 
the access points can be wireless base stations in a future cellular network that 
allows both base-station-to-user transmissions as well as direct user-to-user transmissions. 
Alternatively, some of the access points can be smaller femtocell nodes. 
To increase the capacity of wireless systems, it is essential for future networks to enable 
such femtocell access and/or direct user-to-user transmission.   
This paper considers only 1-hop communication, so that
all downloads are received either directly from an access point or from another user.  
The possibility of a user acting as a multi-hop relay is not considered here. 

Our prior work \cite{neely-p2p-infocom2011} treats a more complex model where each user can actively
download multiple files at the same time, and where the arrival process of desired files at each user is random. 
A key challenge in this case is the complexity explosion associated with labeling each file according to the 
subset of other devices that already have it.  
This is solved in \cite{neely-p2p-infocom2011} 
by first observing the subset information of each newly arriving file, making an immediate 
decision about which device in this subset should transmit the desired packets, and 
then placing this request in a \emph{request queue} at that selected device. 
The devices are not required to transmit these packets immediately. Rather, they can satisfy 
the requests over time.   This procedure does not sacrifice optimality, and yields 
an algorithm with polynomial complexity. 
However, while this is effective for 
networks with static topology, it can result 
in significant delays in mobile networks.  That is because a device that is pre-selected
for transmission may not currently be in close proximity to the intended user, and/or may move
out of transmission range before transmission occurs.  

The current paper provides an alternative algorithm that reduces delay, particularly in the 
mobile case.  To do so, we use a simpler model that assumes each user desires only one file that has
infinite size.  This enables us to focus on scheduling to achieve optimally 
fair download rates for each user.   Rather than pre-selecting devices for eventual transmission, our
algorithm makes opportunistic packet transmission decisions from the set of current neighbors that have the 
desired file.  This also facilitates distributed implementation.  The infinite file size assumption is  
an approximation that is reasonable when file sizes are large, such as for video files.  
A heuristic extension to the case of finite file sizes is treated in Section \ref{section:finite}.
This heuristic is based on the optimality insights obtained from the infinite size case.  

Prior work on fair scheduling in mobile ad-hoc networks has considered token-based and
economics-based mechanisms to incentivize 
participation \cite{stimulate-cooperation1}\cite{crowcroft-incentives}\cite{neely-pricing-journal}. 
Incentives are also well studied in the peer-to-peer literature.
For example, algorithms in 
\cite{jun-p2pecon05}\cite{bharambe-infocom06}\cite{tamilmani04swift}\cite{piatek-nsdi07}
 track the number of uploads and downloads for each user, and give preferential treatment 
 to those who have helped others.  Algorithms based on tokens, markets, and peer reputations
 are considered in  \cite{liao-perf07}\cite{lian-iptps06}\cite{freedman-iptps08}.  Such algorithms
 are conceptually based on the simple ``tit-for-tat'' or ``treat-for-treat'' principle, where rewards 
 are given in direct proportion to the amount of self-sacrificial behavior at each user. 
Our current paper also considers a tit-for-tat mechanism.  However, a key difference is that 
it designs a \emph{tit-for-tat constraint} directly into the optimization problem
(similar to our prior work \cite{neely-p2p-infocom2011} that used a different network model). 
Remarkably, the 
solution of the optimization naturally results in an intuitive token-like procedure, where the number
 of tokens in a virtual queue at each user  
determines the \emph{reputation} of the user.   Peer-to-peer transmissions between user pairs 
are given preferential treatment according to the \emph{differential reputation} 
between the users. This is similar to the \emph{backpressure} principle for optimal network scheduling
\cite{tass-radio-nets}\cite{now}. However, the ``backpressure'' in the present paper is determined by 
token differentials in virtual reputation queues, rather than congestion differentials.

\section{Basic Model} 

Let  $\script{N}$ represent the set of devices, and $\script{K}$ represent the 
set of users, where $\script{K} \subseteq \script{N}$.   
Let $N$ and $K$ be the sizes of these sets. 
For simplicity in this basic model, 
assume that each user $k \in \script{K}$ wants a single file that consists of an infinite number
of fixed-length packets 
(this assumption is modified to treat finite file sizes in Section \ref{section:finite}). 
For each $k \in \script{K}$, define 
$\script{F}_k$ as the subset of devices in $\script{N}$ that have the file desired by user $k$.
The set $\script{F}_k$ can include both users and access points, and represents the set of 
devices that user $k$ can potentially receive packets from as it moves throughout the network. 
If each user $k$ only accepts downloads from a certain subset of devices that it identifies as its 
\emph{social group}, then $\script{F}_k$ can be viewed as 
the intersection of its social group
and the set of devices that have its file. 
We assume that $k \notin \script{F}_k$, 
so that no user wants a file that it already has.  Further, we assume 
$\script{F}_k$ is non-empty, so that at least
one device has the desired file.  This latter assumption is reasonable if the network has 
one or more access points with high speed 
internet connections.  Such access points can be modeled 
as having all desired files. 

The network operates in slotted time with time slots $t \in \{0, 1, 2, \ldots\}$. Every slot $t$, the network 
makes \emph{transmission actions}.  Let $(\mu_{nk}(t))$ be the matrix 
of transmission actions chosen on slot $t$, where each entry $\mu_{nk}(t)$ is the number of packets
that device $n \in \script{N}$ transmits to user $k \in \script{K}$.  The set of all possible matrix options to 
choose from 
is determined by the current \emph{topology state} of the network, as in \cite{now}.  Specifically, 
let $\omega(t)$ represent the topology state on slot $t$, being a vector of parameters
that affect transmission, 
such as current device locations and/or channel conditions. 
Assume $\omega(t)$ takes values in an abstract set $\Omega$, possibly being an infinite 
set. 
The $\omega(t)$ process is assumed
to be ergodic.  In the case when $\Omega$ is finite or countably infinite, the 
steady state probabilities are represented by $\pi(\omega) = Pr[\omega(t) = \omega]$ for all $\omega \in \Omega$. 
Else, the steady state probabilities are represented by an appropriate probability density function. 
These probabilities are not necessarily known by the network controller.  Extensions to the case when 
$\omega(t)$ is non-ergodic are explored in Sections \ref{section:finite} and \ref{section:simulation}.  

For each $\omega \in \Omega$, define $\script{R}(\omega)$
as the set of all transmission matrices $(\mu_{nk})$ that are possible when $\omega(t) =\omega$.
The exact structure of $\script{R}(\omega)$ 
depends on the particular physical characteristics and interference properties of the network.   
One 
may choose the $\script{R}(\omega)$ sets to constrain the transmission variables $\mu_{nk}$
to take integer values, although this is not required in our analysis.  
We assume only 
that each set $\script{R}(\omega)$ has the following basic properties: 
\begin{itemize} 
\item Every matrix $(\mu_{nk}) \in \script{R}(\omega)$ has  non-negative entries.

\item If $(\mu_{nk}) \in \script{R}(\omega)$, then $(\tilde{\mu}_{nk}) \in \script{R}(\omega)$, where
$(\tilde{\mu}_{nk})$ is any matrix formed by setting one or more entries of $(\mu_{nk})$ to zero. 

\item Every matrix $(\mu_{nk}) \in \script{R}(\omega)$ must  
satisfy the constraint 
 $0 \leq \sum_{n\in\script{N}} \mu_{nk} \leq x_{k}^{max}$ for all $k \in \script{K}$, where $x_{k}^{max}$
is a given bound on the number of packets that can be delivered to user $k$ on 
one slot, regardless of  $\omega$. 
\end{itemize}

\subsection{Example Network Structure} \label{section:example-network-structure} 

This section presents an example network model that fits into the above framework. 
The network region is divided into $C$ non-overlapping subcells.  Let $c_n(t)$ be the 
current subcell of device $n$, so that $c_n(t) \in \{1, \ldots, C\}$. Let
$\bv{c}(t) = (c_1(t), \ldots, c_N(t))$ 
be the vector of device locations on slot $t$.  
The access points have fixed locations, 
while the mobile users can change subcells from slot to slot.  Let $\bv{S}(t) = (S_{nk}(t))$ 
be a channel state matrix, where $S_{nk}(t)$ is the number
of packets that device $n$ can transmit to device $k$ on slot $t$, provided that there are no 
competing transmissions (as defined below). 
The value $S_{nk}(t)$ can depend 
on the $\bv{c}(t)$ location
vector.  
Let the topology state $\omega(t)$ be given by
$\omega(t) = (\bv{c}(t), \bv{S}(t))$. 
Assume access point transmissions are orthogonal from all 
other access point transmissions and from all 
peer-to-peer user transmissions. For each $\omega(t)$, define $\script{R}(\omega(t))$ as
the set of all $(\mu_{nk}(t))$ matrices with entries that satisfy: 
\begin{itemize} 
\item  $\mu_{nk}(t) \in \{0, S_{nk}(t)\}$ for all $n \in \script{N}$ and $k \in \script{K}$.
\item Users can only transmit to other users currently in their same subcell, so that 
$\mu_{nk}(t)=0$ whenever $n \in \script{K}$, $k\in\script{K}$, and $c_n(t) \neq c_k(t)$. 
\item At most one user-to-user transmission can take  place per subcell on a given slot. 
\item Each access point can send to at most one user per slot. 
\end{itemize} 
This particular structure is useful because it allows user transmissions to be separately scheduled
in each subcell, and access point transmissions to be scheduled separately from all other decisions. 
The sets $\script{R}(\omega(t))$ can be defined differently 
for more 
sophisticated interference models.  For example, the transmission rate of an 
access point can depend on whether or not neighboring access points are scheduled for transmission.

\subsection{Optimization Objective} 

For each $a \in \script{N}$ and $b \in \script{K}$, define $f_{ab}(t)$ to be $1$ if, on slot $t$, 
device $a$ has the 
file requested by user $b$, and $0$ otherwise: 
\begin{equation} \label{eq:fab}  f_{ab}(t) \defequiv  \left\{ \begin{array}{ll}
                          1 &\mbox{ if $a \in \script{F}_b$} \\
                             0  & \mbox{ otherwise} 
                            \end{array}
                                 \right. 
\end{equation} 
Because each user desires a single infinite size file, 
the  $f_{ab}(t)$ values do not change with time.  However, we use the ``$(t)$'' notation because it 
facilitates the extension to more general cases in Section \ref{section:finite}, where the sets $\script{F}_b$ in the right-hand-side of \eqref{eq:fab} are
extended to $\script{F}_b(t)$. 
Every slot $t$, the network controller observes $\omega(t)$ and chooses $(\mu_{nk}(t)) \in \script{R}(\omega(t))$. 
For each $k \in \script{K}$, define $x_k(t)$ as the total number
 of packets that user $k$ receives from others on slot $t$,  and define $y_k(t)$ as the total number of packets that
 user $k$ delivers to others on slot $t$: 
 \begin{eqnarray} 
 \mbox{$x_k(t) \defequiv \sum_{a\in \script{N}} \mu_{ak}(t)f_{ak}(t)$} \label{eq:xk}  \\
  \mbox{$y_k(t) \defequiv \sum_{b \in \script{K}} \mu_{kb}(t)f_{kb}(t)$} \label{eq:yk}
\end{eqnarray} 
The multiplication $\mu_{ab}(t)f_{ab}(t)$ in \eqref{eq:xk} and \eqref{eq:yk} formally ensures that
user $b$ can only receive a packet from another device that has the file it is requesting. 

For a given control algorithm, let $\overline{x}_k$ and $\overline{y}_k$
represent the time averages of the $x_k(t)$ and $y_k(t)$ processes for all $k \in \script{K}$: 
\begin{eqnarray*}
\mbox{$\overline{x}_{k} = \lim_{t\rightarrow\infty} \frac{1}{t}\sum_{\tau=0}^{t-1} x_{k}(\tau) \: \: , \: \: 
\overline{y}_k = \lim_{t\rightarrow\infty} \frac{1}{t}\sum_{\tau=0}^{t-1} y_k(\tau)$}
\end{eqnarray*}
These limits are temporarily assumed to exist.\footnote{This is only to simplify exposition of 
the optimization 
goal. The analysis in later sections does not a-priori 
assume the limits exist.}  The value $\overline{x}_k$ is the time average download rate of user $k$, 
and $\overline{y}_k$ is the time average upload rate. 
The goal is to develop a control algorithm that solves the following stochastic network 
optimization problem:\footnote{Note that the trivial all-zero
solution is always feasible.} 
\begin{eqnarray} 
\mbox{Maximize:} & \sum_{k\in\script{K}} \phi_k(\overline{x}_k) \label{eq:p0} \\
\mbox{Subject to:} & \alpha_k \overline{x}_k \leq  \beta_k + \overline{y}_k \: \: \forall k \in \script{K} \label{eq:p1}  \\
& (\mu_{nk}(t)) \in \script{R}(\omega(t)) \: \: \forall t \in \{0, 1, 2, \ldots\}  \label{eq:p2} 
\end{eqnarray} 
 where for each $k \in \script{K}$, $\phi_k(x)$ 
 are given concave functions and $\alpha_k, \beta_k$ are 
 given non-negative weights. 
 The value 
 $\phi_k(\overline{x}_k)$ represents the \emph{utility} associated with user $k$ downloading at 
 rate $\overline{x}_k$.   The constraints \eqref{eq:p1} are the \emph{tit-for-tat} constraints from 
 \cite{neely-p2p-infocom2011}.  These constraints incentivize
 participation.  They allow a ``free'' download rate of $\beta_k/\alpha_k$.  Users can only receive rates beyond
 this value in proportion to the  
 rate at which they help others.  
 The value $\beta_k$ can be set to zero to remove
 the ``free'' rate, and the value $\alpha_k$ can be set to $0$ to remove the tit-for-tat constraint for user
 $k$.  
 Choosing larger values of $\alpha_k$ (typically in the range $0 \leq \alpha_k \leq 1$) 
 leads to more stringent requirements about helping others. 
 These tit-for-tat constraints restrict the system operation and thus can affect overall network utility. 
 Removing these constraints by setting $\alpha_k=0$ for all $k$ leads to the largest network utility, 
 but does not embed any participation incentives into the optimization problem.

 The functions $\phi_k(x)$ are assumed to be 
 concave, continuous, and non-decreasing over the interval $x \geq 0$.  
 They are not required to be differentiable.
 For  example, they can be piecewise linear, such as
 $\phi_k(x) = \min[x, \theta_k]$, 
 where $\theta_k$ is a given constant rate desired by user $k$. 
 Alternatively, one can choose $\phi_k(x) = \ln(x)$ for each $k\in\script{K}$, which leads to 
 the well known \emph{proportional fairness utility} \cite{kelly-charging}. 
 
 One may want to modify the problem \eqref{eq:p0}-\eqref{eq:p2} by specifying separate utility functions
 for the user-to-user download rates and the access-point-to-user download rates.  This is possible
 by creating two  ``virtual users'' $m_1(k)$ and $m_2(k)$  for each actual user $k \in \script{K}$.
 Channel conditions
 for the virtual users $m_1(k), m_2(k)$ are defined to be the 
 same as for the actual user $k$, 
 with the exception that virtual user $m_1(k)$ is restricted to receive only from other users, 
 while virtual user $m_2(k)$ is restricted to receive only from the access points.

 \section{The Dynamic Algorithm} 
 
 The problem \eqref{eq:p0}-\eqref{eq:p2} is solved via 
 the stochastic network optimization theory of 
 \cite{now}\cite{sno-text}. First note that problem \eqref{eq:p0}-\eqref{eq:p2} 
 is equivalent to the following problem that uses
 \emph{auxiliary variables} $\gamma_k(t)$: 
 \begin{eqnarray} 
 \mbox{Maximize:} & \sum_{k\in\script{K}} \phi_k(\overline{\gamma}_k) \label{eq:q0}  \\
 \mbox{Subject to:} & \alpha_k\overline{x}_k \leq \beta_k + \overline{y}_k \: \: \forall k \in \script{K} \label{eq:q1} \\
 & \overline{\gamma}_k \leq \overline{x}_k \: \: \forall k \in \script{K} \label{eq:q2} \\
 & (\mu_{nk}(t)) \in \script{R}(\omega(t)) \: \: \forall t \in \{0, 1, 2, \ldots\} \label{eq:q3} \\
 & 0 \leq \gamma_k(t) \leq x_k^{max} \: \: \forall t \in \{0, 1, 2, \ldots\}  \label{eq:q4} 
 \end{eqnarray} 
 The auxiliary variables $\gamma_k(t)$ act as proxies for the actual download
 variables $x_k(t)$. This is useful for separating the nonlinear 
 utility optimization from the 
 network transmission decisions.   
 
 It can be shown that the optimal utility value is the same for both problems
  \eqref{eq:p0}-\eqref{eq:p2} and 
 \eqref{eq:q0}-\eqref{eq:q4} \cite{sno-text}.  Let $\phi^*$ denote this optimal utility.
 Now consider an algorithm that
 solves the problem \eqref{eq:q0}-\eqref{eq:q4}.  Let $(\mu_{nk}(t))$,  $(\gamma_k(t))$ be the decisions 
made over time, and let $\overline{x}_k$, $\overline{y}_k$, $\overline{\gamma}_k$
be the corresponding time averages, 
all of which satisfy the constraints of the problem \eqref{eq:q0}-\eqref{eq:q4}.   Note that these
constraints 
 include all of the desired constraints of the original problem
\eqref{eq:p0}-\eqref{eq:p2}. 
Then: 
\begin{eqnarray}
\phi^* &=& \mbox{$\sum_{k\in\script{K}} \phi_k(\overline{\gamma}_k)$}   \label{eq:intuitive-1} \\
&\leq& \mbox{$\sum_{k\in\script{K}} \phi_k(\overline{x}_k)$} \label{eq:intuitive-2} \\
&\leq& \phi^* \label{eq:intuitive-3} 
\end{eqnarray}
where \eqref{eq:intuitive-1} holds because this algorithm achieves the optimal utility 
$\phi^*$, \eqref{eq:intuitive-2} holds because this algorithm must yield time averages
that satisfy $\overline{\gamma}_k \leq \overline{x}_k$ for all $k\in\script{K}$, and \eqref{eq:intuitive-3}
holds because the transmission decisions of the algorithm satisfy all desired 
constraints of the original problem, and thus produce $\overline{x}_k$ values that give a 
utility that is less than or equal to the optimal utility of the original problem (which is also $\phi^*$).   
It follows that any algorithm
that is optimal for \eqref{eq:q0}-\eqref{eq:q4} makes decisions that are also optimal for the 
original problem.

 \subsection{Virtual Queues} 
 
 To facilitate satisfaction of the tit-for-tat constraints \eqref{eq:q1}, for each $k \in \script{K}$ define 
 a \emph{virtual queue} $H_k(t)$, with dynamics: 
 \begin{equation} \label{eq:h-update} 
 H_k(t+1) = \max\left[H_k(t) + \alpha_kx_k(t) - \beta_k  - y_k(t), 0\right] 
 \end{equation} 
 where  $x_k(t)$, $y_k(t)$ are defined in \eqref{eq:xk}-\eqref{eq:yk}. 
 The intuition is that $\alpha_kx_k(t)$ can be viewed as  the ``arrivals'' on slot $t$, and 
 $\beta_k +y_k(t)$ can be viewed as  the 
 ``offered service'' on slot $t$.  Stabilizing queue $H_k(t)$ ensures the time average of the 
 ``arrivals'' is less than or equal to the time average of the ``service,'' which ensures constraints
 \eqref{eq:q1}.

 Similarly, to satisfy the constraints \eqref{eq:q2}, for each $k \in \script{K}$ define another
 virtual queue $Q_k(t)$ with dynamics: 
 \begin{equation} \label{eq:q-update} 
 Q_k(t+1) = \max[Q_k(t) + \gamma_k(t) -  x_k(t), 0] 
 \end{equation} 
 The update \eqref{eq:q-update} can be interpreted as
 a queueing equation where $\gamma_k(t)$ is the amount of data requested by 
 user $k$ on slot $t$, and $x_k(t)$ is the amount of service. 
 Stabilizing $Q_k(t)$ ensures
 $\overline{\gamma}_k \leq \overline{x}_k$. 
   
  \subsection{The Drift-Plus-Penalty Algorithm} 
  
  Define the following quadratic function $L(t)$: 
  \[ L(t) \defequiv \mbox{$\frac{1}{2}\sum_{k\in\script{K}} [Q_k(t)^2 + H_k(t)^2]$} \]
  Intuitively, 
  taking actions to push $L(t)$ down tends to maintain stability of all queues. 
  Define $\Delta(t)$ as the drift on slot $t$: 
  \[ \Delta(t) \defequiv L(t+1) - L(t) \]
  Let $\bv{\Theta}(t) = (Q_k(t), H_k(t))|_{k\in\script{K}}$ be the vector of all virtual queue values 
  on slot $t$.  
  The algorithm is designed to observe the queues and the current $\omega(t)$ on each slot $t$, 
  and to  then  choose $(\mu_{nk}(t)) \in \script{R}(\omega(t))$ and $\gamma_k(t)$ subject to 
$0 \leq \gamma_k(t) \leq x_k^{max}$ to 
  minimize a bound on the following \emph{drift-plus-penalty expression} \cite{sno-text}: 
  \[ \Delta(t) - V\mbox{$\sum_{k\in\script{K}} \phi_k(\gamma_k(t))$} \]
  where $V$ is a non-negative weight that affects a performance bound.  Intuitively, the value of $V$
  affects the extent to which our control action on slot $t$ 
  emphasizes utility optimization in comparison to drift minimization. 
  
  \begin{lem} \label{lem:dpp}  Under any control algorithm, we have: 
  \begin{eqnarray}
  \Delta(t) - V\mbox{$\sum_{k\in\script{K}} \phi_k(\gamma_k(t))$} \leq B(t) - V\mbox{$\sum_{k\in\script{K}}$}\phi_k(\gamma_k(t))\nonumber \\
  + \mbox{$\sum_{k\in\script{K}} H_k(t)[\alpha_kx_k(t) - \beta_k - y_k(t)]$} \nonumber \\
  + \mbox{$\sum_{k\in\script{K}} Q_k(t)[\gamma_k(t) -x_k(t)]$} \label{eq:dpp} 
  \end{eqnarray}
  where $B(t)$ is defined: 
  \begin{eqnarray*}
  B(t) &\defequiv& \mbox{$\frac{1}{2}\sum_{k\in\script{K}} (\alpha_kx_k(t) - \beta_k - y_k(t))^2$} \\
  && + \mbox{$\frac{1}{2}\sum_{k\in\script{K}} (\gamma_k(t)-x_k(t))^2$}
  \end{eqnarray*}
  \end{lem}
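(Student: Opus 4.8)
The plan is to reduce the claim to a pure Lyapunov drift bound and then apply the standard squared-queue inequality to each virtual queue separately. First I observe that the penalty term $-V\mbox{$\sum_{k\in\script{K}}\phi_k(\gamma_k(t))$}$ appears identically on both sides of \eqref{eq:dpp}, so it cancels and is simply carried along for free; it therefore suffices to establish the drift bound
\begin{equation*}
\Delta(t) \leq B(t) + \mbox{$\sum_{k\in\script{K}}$} H_k(t)[\alpha_kx_k(t)-\beta_k-y_k(t)] + \mbox{$\sum_{k\in\script{K}}$} Q_k(t)[\gamma_k(t)-x_k(t)].
\end{equation*}

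The key tool is the elementary inequality $(\max[z,0])^2 \leq z^2$, valid for every real $z$ (with equality when $z\geq 0$, and reducing to $0\leq z^2$ when $z<0$). I would apply it to each queue recursion in order to discard the projection. For the $Q_k$ dynamics \eqref{eq:q-update}, setting $z = Q_k(t)+\gamma_k(t)-x_k(t)$ and expanding the square gives
\begin{equation*}
Q_k(t+1)^2 \leq Q_k(t)^2 + (\gamma_k(t)-x_k(t))^2 + 2Q_k(t)(\gamma_k(t)-x_k(t)),
\end{equation*}
and for the $H_k$ dynamics \eqref{eq:h-update}, setting $z = H_k(t)+\alpha_kx_k(t)-\beta_k-y_k(t)$ gives
\begin{equation*}
H_k(t+1)^2 \leq H_k(t)^2 + (\alpha_kx_k(t)-\beta_k-y_k(t))^2 + 2H_k(t)(\alpha_kx_k(t)-\beta_k-y_k(t)).
\end{equation*}

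Next I would assemble the drift. Since $L(t)\defequiv\frac{1}{2}\sum_{k\in\script{K}}[Q_k(t)^2+H_k(t)^2]$, I would subtract $L(t)$ from $L(t+1)$, substitute the two bounds above, and distribute the overall factor of $\frac{1}{2}$. The squared-difference terms retain this factor and collect into precisely $B(t)$, while the factor of $2$ multiplying each linear cross term cancels the $\frac{1}{2}$, so these reassemble into the two weighted sums $\sum_{k\in\script{K}}H_k(t)[\alpha_kx_k(t)-\beta_k-y_k(t)]$ and $\sum_{k\in\script{K}}Q_k(t)[\gamma_k(t)-x_k(t)]$. This establishes the drift bound, and adding $-V\sum_{k\in\script{K}}\phi_k(\gamma_k(t))$ to both sides recovers \eqref{eq:dpp}.

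Because the argument is a direct computation, I do not expect a substantial obstacle. The only point requiring genuine care is the $\max[\cdot,0]$ projection in the queue updates: one cannot expand $Q_k(t+1)^2$ and $H_k(t+1)^2$ as exact squares, and must instead invoke $(\max[z,0])^2\leq z^2$ to drop the projection before expanding. Everything that remains is bookkeeping, namely keeping the square terms grouped into $B(t)$ and matching each cross term to its originating virtual queue.
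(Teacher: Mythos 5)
Your proposal is correct and follows essentially the same argument as the paper's proof: squaring each queue update, invoking $\max[z,0]^2 \leq z^2$ to discard the projection, summing over $k \in \script{K}$, and dividing by $2$, with the penalty term $-V\sum_{k\in\script{K}}\phi_k(\gamma_k(t))$ carried along on both sides. Your explicit remark that the projection prevents exact expansion of the squares is precisely the one point of care the paper's proof also relies on.
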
 
  
  \begin{proof} 
  Squaring \eqref{eq:h-update} and using $\max[y,0]^2 \leq y^2$ for any real number $y$ yields: 
  \begin{eqnarray*}
  H_k(t+1)^2 &\leq& H_k(t)^2 + (\alpha_kx_k(t) - \beta_k - y_k(t))^2 \\
  && 2H_k(t)(\alpha_kx_k(t) - \beta_k - y_k(t))
  \end{eqnarray*}
  Similarly, squaring \eqref{eq:q-update} gives: 
  \begin{eqnarray*}
  Q_k(t+1)^2 &\leq& Q_k(t)^2 + (\gamma_k(t) - x_k(t))^2 \\
  && + 2Q_k(t)(\gamma_k(t) - x_k(t)) 
  \end{eqnarray*}
  Summing over $k \in \script{K}$ and dividing by 2 yields the result. 
  \end{proof} 
  
  The value of $B(t)$ can be upper bounded by a finite constant $B$ every slot, where $B$ depends on
  the maximum possible values that $\mu_{nk}(t)$ and $\gamma_k(t)$  can take. 
  The algorithm below is defined by observing the queue states and $\omega(t)$ every slot $t$, 
  and choosing actions to minimize the last three terms on the right-hand-side of \eqref{eq:dpp} (not including
  the first term $B(t)$), given these
  observed quantities. Specifically, every slot $t$: 
  \begin{itemize} 
  \item ($\gamma_k(t)$ decisions) Each user $k \in \script{K}$ observes $Q_k(t)$ and chooses $\gamma_k(t)$
  to solve: 
  \begin{eqnarray*}
  \mbox{Maximize:} & V\phi_k(\gamma_k(t)) - Q_k(t)\gamma_k(t) \\
  \mbox{Subject to:} & 0 \leq \gamma_k(t) \leq x_k^{max} 
  \end{eqnarray*}
  
  \item ($\mu_{nk}(t)$ decisions) The network controller observes all queues $(\bv{Q}(t), \bv{H}(t))$ and
  the topology state $\omega(t)$ on slot $t$, and chooses matrix $(\mu_{nk}(t)) \in \script{R}(\omega(t))$ to 
  maximize the following expression: 
  \begin{eqnarray}
  \mbox{$\sum_{n\in \script{N}}\sum_{k\in\script{K}}\mu_{nk}(t)f_{nk}(t)W_{nk}(t)$} \label{eq:transmission-alg}  
  \end{eqnarray}
  where weights $W_{nk}(t)$ are defined: 
  \[ W_{nk}(t) \defequiv \left[Q_k(t) + 1_{\{n\in\script{K}\}}H_n(t) - \alpha_kH_k(t) \right] \]
  where $1_{\{n\in\script{K}\}}$ is an indicator function that is 
  $1$ if device $n$ is a user, and $0$ if device $n$ is an access point. 
  
  \item (Queue updates) Update virtual queues $H_k(t)$ and $Q_k(t)$ 
  for all $k \in \script{K}$ via \eqref{eq:h-update} and \eqref{eq:q-update}.
  \end{itemize}   
  
  The $\gamma_k(t)$ decisions can be viewed as flow control actions that restrict the amount of 
  data requested from user $k$ on each slot.  They are made separately
  at  each user $k$. 
  The $(\mu_{nk}(t))$ decisions are transmission
  actions made at the network layer.  Examples are given below.

  \subsection{Example Flow Control Decisions} 
  
  Suppose we have the following piecewise linear utility functions for all $k \in \script{K}$: 
  \begin{equation} \label{eq:piecewise-linear} 
   \phi_k(x_k) = \nu_k\max[x_k, \theta_k]
  \end{equation}  
 where $\nu_k$ are given positive values that act as priority weights for the users,
 and $\theta_k$ are are given positive values that represent the maximum desired 
 communication rate for each user.  Assume that $\theta_k \leq x_k^{max}$ for all $k \in \script{K}$. 
 Then the algorithm in the previous section chooses $\gamma_k(t)$ for each user $k \in \script{K}$ as: 
 \begin{equation*} 
  \gamma_k(t) =  \left\{ \begin{array}{ll}
                          \theta_k  &\mbox{ if $Q_k(t) \leq V\nu_k$} \\
                             0  & \mbox{ otherwise} 
                            \end{array}
                                 \right.
  \end{equation*}
  
  Alternatively, 
  suppose we have the following strictly concave utility functions for all $k \in \script{K}$: 
  \begin{equation} \label{eq:prop-fair-approx}
  \phi_k(x_k) = \ln(1+ \nu_kx_k)
  \end{equation}  
 Then the $\gamma_k(t)$ decisions are: 
  \begin{eqnarray}
   \gamma_k(t) = 
                         \left[\frac{V}{Q_k(t)} -\frac{1}{\nu_k}\right]_0^{x_k^{max}} \label{eq:prop-fair-approx-gamma}
 \end{eqnarray}
 where the operation $[y]_0^{b}$ is equal to $y$ if $0 \leq y \leq b$, $0$ if $y < 0$, and $b$ if $y>b$.  
 These utility functions can be viewed as an accurate approximation of the proportionally fair utility function
 if we use $\nu_k = \nu$ for all $k$, for a large value of $\nu$. 
 
 Finally, using the proportionally fair utilities 
 $\phi_x(x_k) = \ln(x_k)$ for all $k \in \script{K}$ leads to 
 $\gamma_k(t) = [V/Q_k(t)]_0^{x_k^{max}}$, which is indeed the same as \eqref{eq:prop-fair-approx-gamma}
 in the limit as $\nu_k\rightarrow\infty$.

 \subsection{Example Transmission Decisions}
 
 Suppose the network has the special structure specified in Section \ref{section:example-network-structure}. 
 Let $\script{A}$ be the set of access points. The set $\script{A}$ is disjoint from the user set $\script{K}$,
 and $\script{A} \cup \script{K} = \script{N}$.  
 Let $\script{K}_a(t)$ be the set of users within reach of access point
 $a$ on slot $t$.  Then each access point $a \in \script{A}$ observes channels $S_{ak}(t)$ and 
 queues $Q_k(t)$, $H_k(t)$ for all users $k \in \script{K}_a(t)$ and chooses to serve the single
 user in $\script{K}_a(t)$ with the largest non-negative 
 value of   $f_{ak}(t)S_{ak}(t)[Q_k(t) - \alpha_kH_k(t)]$ (breaking ties arbitrarily), 
 and chooses no users if this value is negative for 
 all $k \in \script{K}_a(t)$. 
 
 Further, the user pairs in each subcell $c \in \{1, \ldots, C\}$ are observed.  Amongst all users
 in a given subcell, the ordered
 user pair $(a,k)$ with the largest non-negative value of $f_{ak}(t)S_{ak}(t)[Q_k(t) + H_a(t) - \alpha_kH_k(t)]$
  is selected for peer-to-peer transmission in that subcell (breaking ties arbitrarily). 
  No peer-to-peer transmission occurs in the
  subcell if this value is negative for all user pairs. 
 This can be coordinated by a controller in each subcell, 
  or can be done by having each user first transmit control information from which the winning user pair
  is determined.  

  \section{Algorithm Performance} 
  
 The algorithm fits into the stochastic network optimization framework of \cite{sno-text}, and hence
 it satisfies all desired constraints of the original problem \eqref{eq:p0}-\eqref{eq:p2} (as well as the 
 transformed problem \eqref{eq:q0}-\eqref{eq:q4}), with an overall utility value that differs from 
 the optimal $\phi^*$ by at most $O(1/V)$, which can be made arbitrarily small as the parameter $V$ is increased. 
 However, the $V$ parameter directly affects the size of the virtual queues, which affects convergence time
 of the algorithm to this utility.  
 Here we show that if the utility functions and network transmission sets have additional structure, the 
 virtual queues $Q_k(t)$ and $H_k(t)$ can be deterministically bounded for all 
 time by a constant that is proportional to $V$. 
 
 \subsection{Bound on Data Queues $Q_k(t)$} 
 
 Suppose each utility function $\phi_k(t)$ has right-derivatives that are bounded by a finite 
 constant $\nu_k>0$ over the interval $0 \leq x_k \leq x_{k}^{max}$.  For example, this holds for the 
piecewise linear utility functions \eqref{eq:piecewise-linear} and the strictly concave 
 utility functions
 \eqref{eq:prop-fair-approx}, with the $\nu_k$ parameters specified there indeed being the maximum 
 right derivatives.  
 
 \begin{lem} \label{lem:q-bound} If utility function $\phi_k(x)$ has maximum right derivative $\nu_k>0$, then:
 \[ 0 \leq Q_k(t) \leq V\nu_k + x_k^{max} \: \: \forall t \in \{0, 1, 2, \ldots\} \] 
 provided that this inequality holds for $Q_k(0)$. 
 \end{lem}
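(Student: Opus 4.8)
The plan is to prove the bound by induction on $t$, with the base case holding by the standing hypothesis on $Q_k(0)$. The crux is to understand how the flow-control decision $\gamma_k(t)$ depends on the backlog $Q_k(t)$, and in particular to show that the auxiliary variable is driven to zero whenever $Q_k(t)$ is large. Once that threshold behavior is established, the inductive step is the routine observation that a virtual queue can grow by at most its maximum admitted input in a single slot.

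First I would establish the key threshold property: whenever $Q_k(t) > V\nu_k$, the optimal choice is $\gamma_k(t) = 0$. Recall that $\gamma_k(t)$ is chosen to maximize $V\phi_k(\gamma) - Q_k(t)\gamma$ over $0 \leq \gamma \leq x_k^{max}$. Since $\phi_k$ is concave with maximum right-derivative $\nu_k$, the function lies below its supporting line at the origin, giving $\phi_k(\gamma) \leq \phi_k(0) + \nu_k\gamma$ for all $\gamma \geq 0$. Substituting, the objective satisfies $V\phi_k(\gamma) - Q_k(t)\gamma \leq V\phi_k(0) + (V\nu_k - Q_k(t))\gamma$. When $Q_k(t) > V\nu_k$ the coefficient $(V\nu_k - Q_k(t))$ is strictly negative, so for every $\gamma > 0$ the objective is strictly less than its value $V\phi_k(0)$ at $\gamma = 0$; hence $\gamma_k(t) = 0$ is the unique maximizer. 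I prefer this supporting-line argument to a pointwise-derivative argument precisely because $\phi_k$ need not be differentiable, so the concavity inequality handles that subtlety cleanly.

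Given the threshold property, the inductive step splits into two cases based on $Q_k(t)$. From the update $Q_k(t+1) = \max[Q_k(t) + \gamma_k(t) - x_k(t), 0]$ together with $x_k(t) \geq 0$, I have $Q_k(t+1) \leq Q_k(t) + \gamma_k(t)$. If $Q_k(t) \leq V\nu_k$, then since $\gamma_k(t) \leq x_k^{max}$ I get $Q_k(t+1) \leq V\nu_k + x_k^{max}$. If instead $Q_k(t) > V\nu_k$, the threshold property forces $\gamma_k(t) = 0$, so $Q_k(t+1) \leq Q_k(t) \leq V\nu_k + x_k^{max}$ by the induction hypothesis. Either way the bound is preserved, and the lower bound $Q_k(t) \geq 0$ is immediate from the $\max[\cdot, 0]$ in the update.

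The main obstacle is the threshold property, and within it the only genuine issue is the possible non-differentiability of $\phi_k$; I expect the supporting-line inequality above to dispatch this without fuss. Everything after that is bookkeeping: the two-case split and the elementary one-slot growth bound require no further machinery.
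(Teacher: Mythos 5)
Your proof is correct and follows essentially the same route as the paper's: induction on $t$, a case split at the threshold $Q_k(t) \leq V\nu_k$ versus $Q_k(t) > V\nu_k$, and in the latter case the concavity bound $\phi_k(\gamma) \leq \phi_k(0) + \nu_k\gamma$ to show the algorithm must select $\gamma_k(t)=0$, so the queue cannot grow. The paper uses exactly this supporting-line inequality (which, as you note, handles non-differentiable $\phi_k$), so there is no substantive difference between the two arguments.
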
 
 
 \begin{proof}
 Assume that $Q_k(t)\leq V\nu_k + x_k^{max}$ for slot $t$ (it holds by assumption on slot $t=0$).  We prove it also holds for slot
 $t+1$.  First consider the case $Q_k(t)\leq V\nu_k$.  From the queue update equation \eqref{eq:q-update}, 
 we see that this queue can increase by at most $x_k^{max}$ on each slot, and so 
 we have $Q_k(t+1) \leq V\nu_k + x_k^{max}$, proving the result for this case. 
 
 Now consider the case $Q_k(t) > V\nu_k$.  On slot $t$, 
 the algorithm chooses $\gamma_k(t) \in [0, x_k^{max}]$ to maximize the expression: 
\[ V\phi_k(\gamma_k(t)) - Q_k(t)\gamma_k(t) \]
However, for any $\gamma_k(t) \geq 0$ we have: 
\begin{eqnarray*}
&& \hspace{-.5in} V\phi_k(\gamma_k(t)) - Q_k(t) \gamma_k(t) \\
&\leq& V\phi_k(0) + V\nu_k\gamma_k(t) - Q_k(t)\gamma_k(t) \\
&=& V\phi_k(0) + \gamma_k(t)[V\nu_k - Q_k(t)] \\
&\leq& V\phi_k(0) 
\end{eqnarray*}
where equality holds if and only if $\gamma_k=0$ (recall that $[V\nu_k - Q_k(t)] < 0$). 
It follows that the algorithm must choose
$\gamma_k(t)=0$, and so $Q_k(t)$ cannot increase on this slot.  That is: 
\[ Q_k(t+1) \leq Q_k(t) \leq V\nu_k + x_{k}^{max} \]
 \end{proof}

\subsection{Bound on Reputation Queues $H_k(t)$}

 The $H_k(t)$ processes act as \emph{reputation queues} for each user $k \in \script{K}$, being low if 
 user $k$ has a good reputation for helping others, and high otherwise (see queue dynamics in 
 \eqref{eq:h-update}).   These reputations directly
 affect the transmission decisions via the weights $W_{nk}(t)$ in 
 \eqref{eq:transmission-alg}.   To see this, define $\script{A}$ as the set of access points. 
 First consider the weight 
 seen by an access point $a \in \script{A}$ for user $k$ on slot $t$: 
 \[ Q_k(t) - \alpha_k H_k(t) \]
 This weight is large if $H_k(t)$ is small
 (meaning user $k$ has a good reputation).  
The next lemma shows that if utility functions have bounded right-derivatives $\nu_k$, then
access points will  refuse to send to any user if its $H_k(t)$ value exceeds a threshold. 

\begin{lem} If $\phi_k(t)$ has maximum right-derivative $\nu_k>0$,  and if initial queue backlog
satisfies $0 \leq Q_k(0) \leq V\nu_k + x_k^{max}$, then no access point 
will send to user $k$ on a given slot $t$ if $H_k(t) > \frac{1}{\alpha_k}[V\nu_k + x_k^{max}]$. 
\end{lem}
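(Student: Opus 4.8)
The plan is to show that, under the stated hypotheses, the transmission weight an access point attaches to user $k$ is \emph{strictly} negative on slot $t$, so that the max-weight rule \eqref{eq:transmission-alg} never assigns a positive transmission to $k$ from any access point. First I would invoke Lemma \ref{lem:q-bound}: since $\phi_k$ has maximum right-derivative $\nu_k$ and the initial backlog obeys $0 \leq Q_k(0) \leq V\nu_k + x_k^{max}$, that lemma gives the uniform bound $Q_k(t) \leq V\nu_k + x_k^{max}$ for all $t$.

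Next I would specialize the weight $W_{nk}(t)$ to an access point. For any $a \in \script{A}$ we have $a \notin \script{K}$, so the indicator $1_{\{a\in\script{K}\}}$ vanishes and the weight collapses to $W_{ak}(t) = Q_k(t) - \alpha_k H_k(t)$. The hypothesis $H_k(t) > \tfrac{1}{\alpha_k}[V\nu_k + x_k^{max}]$ presupposes $\alpha_k > 0$; multiplying through by $\alpha_k$ gives $\alpha_k H_k(t) > V\nu_k + x_k^{max} \geq Q_k(t)$, where the last inequality is the bound from Lemma \ref{lem:q-bound}. Hence $W_{ak}(t) = Q_k(t) - \alpha_k H_k(t) < 0$.

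It remains to pass from the strict negativity of the weight to the conclusion that $\mu_{ak}(t)=0$ under the optimal transmission decision. The transmission rule chooses $(\mu_{nk}(t)) \in \script{R}(\omega(t))$ to maximize $\sum_{n\in\script{N}}\sum_{k\in\script{K}} \mu_{nk}(t) f_{nk}(t) W_{nk}(t)$. The contribution of the $(a,k)$ entry is $\mu_{ak}(t) f_{ak}(t) W_{ak}(t)$; since $\mu_{ak}(t) \geq 0$, $f_{ak}(t) \in \{0,1\}$, and $W_{ak}(t) < 0$, this term is at most zero, and is strictly negative whenever $f_{ak}(t)=1$ and $\mu_{ak}(t) > 0$. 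I would then invoke the structural property of $\script{R}(\omega)$ stated in the basic model---that setting any entry of a feasible matrix to zero yields a feasible matrix---to argue that replacing $\mu_{ak}(t)$ by $0$ produces a feasible matrix with a strictly larger objective in that case, contradicting optimality. Therefore the maximizer sets $\mu_{ak}(t)=0$ for every $a\in\script{A}$ whenever $f_{ak}(t)=1$ (and when $f_{ak}(t)=0$ no packet of $k$'s file is delivered in any event), so no access point sends to user $k$.

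The arithmetic establishing $W_{ak}(t) < 0$ is routine once Lemma \ref{lem:q-bound} is available; the step I expect to require the most care is the last one---converting a strictly negative weight into a ``no transmission'' statement---because it hinges on the zero-padding closure property of the sets $\script{R}(\omega)$ and on correctly handling the $f_{ak}(t)$ factor and possible ties, rather than on the arithmetic itself.
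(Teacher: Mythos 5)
Your proposal is correct and follows essentially the same route as the paper's proof: invoke Lemma \ref{lem:q-bound} to bound $Q_k(t)$, deduce that the access-point weight $Q_k(t) - \alpha_k H_k(t)$ is strictly negative, and conclude that the max-weight rule sets $\mu_{ak}(t)=0$. The only difference is that you carefully justify the final step via the zero-padding closure property of $\script{R}(\omega)$ and the handling of $f_{ak}(t)$, a step the paper treats as immediate; this is a welcome bit of extra rigor but not a different argument.
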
 

\begin{proof} 
Lemma \ref{lem:q-bound}
 ensures $Q_k(t) \leq V\nu_k + x_k^{max}$ for all $t$.  It follows that if $H_k(t) > \frac{1}{\alpha_k}[V\nu_k + x_x^{max}]$, then the weight seen by an access point $a\in\script{A}$ for user $k$ satisfies: 
 \begin{eqnarray*}
 Q_k(t) - \alpha_kH_k(t) &\leq& V\nu_k + x_k^{max} - \alpha_kH_k(t) \\
 &<& 0 
 \end{eqnarray*}
 Because the weight is negative,  the max-weight functional 
 \eqref{eq:transmission-alg} is maximized by choosing $\mu_{ak}(t)=0$, so that access point  
 $a$ will not send data to user $k$ on slot $t$. 
 \end{proof}

 Now suppose user $u \in \script{K}$ considers transmitting to another user $k \in \script{K}$.  
 User $u$ sees
 the weight: 
 \[ Q_k(t) + H_u(t) - \alpha_kH_k(t) \]
 The value $H_u(t) -\alpha_kH_k(t)$ can be viewed as a \emph{differential reputation}. 
 We again see that a relatively low value of $H_k(t)$ improves the weights for user $k$.
  The next lemma shows that all queues $H_k(t)$ are deterministically bounded. 
  For simplicity, we state the lemma under the assumption that all initial queue backlogs
  are zero. 
 
 \begin{lem} \label{lem:h-bound} If all utility functions have right-derivatives bounded by 
 finite constants $\nu_k>0$, if $\beta_k>0$ for all $k \in \script{K}$, and if initial backlog satisfies
$Q_k(0) = H_k(0) =0$ for all $k \in \script{K}$, then there are finite constants $C_1$ and $C_2$,
both independent of $V$, such that: 
\begin{eqnarray*}
\norm{\bv{\Theta}(t)} \leq C_1 + C_2V \: \: \forall t \in \{0, 1, 2, \ldots\} 
\end{eqnarray*}
where $\norm{\bv{\Theta}(t)}$ is defined: 
\[ \norm{\bv{\Theta}(t)} \leq \mbox{$\sqrt{\sum_{k\in\script{K}}H_k(t)^2 + \sum_{k\in\script{K}}Q_k(t)^2}$} \]
 \end{lem}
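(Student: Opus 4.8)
The plan is to bound the two families of virtual queues separately and then combine, since $\norm{\bv{\Theta}(t)} \le \sqrt{K}\,\max_{k}\max[Q_k(t),H_k(t)]$ and so it suffices to bound each $Q_k(t)$ and each $H_k(t)$ deterministically by an expression of the form $C_1+C_2V$. The data queues are immediate: Lemma \ref{lem:q-bound} already gives $Q_k(t)\le V\nu_k+x_k^{max}$ for all $t$ (using $Q_k(0)=0$), which is exactly of the required form. Hence essentially all of the content lies in the reputation queues $H_k(t)$, and that is where I would concentrate.

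For the $H_k(t)$ queues I would run a threshold argument. The update \eqref{eq:h-update} shows that $H_k$ can rise on a slot only if $x_k(t)>0$, i.e. only if user $k$ actually receives a packet, and that it rises by at most $\alpha_kx_k^{max}$ per slot; so if I can exhibit a threshold $H^{max}=\Theta(V)$ above which user $k$ receives nothing, then $H_k(t)\le H^{max}+\alpha_kx_k^{max}$ for all $t$ because $H_k(0)=0$. Reception from an access point is already ruled out above $\frac{1}{\alpha_k}[V\nu_k+x_k^{max}]$ by the lemma immediately preceding this one (this is where the $Q_k$ bound and the $\nu_k,\beta_k$ parameters enter). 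The only remaining way to grow is a peer-to-peer packet from some user $u$, which by the transmission rule \eqref{eq:transmission-alg} requires $W_{uk}(t)=Q_k(t)+H_u(t)-\alpha_kH_k(t)\ge 0$, i.e. $\alpha_kH_k(t)\le (V\nu_k+x_k^{max})+H_u(t)$.

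The main obstacle is precisely this peer term $H_u(t)$: a high-reputation user $k$ can still be served by a peer $u$ whose own reputation is large, so no single-user threshold closes by itself — the bound on $H_k$ is coupled to the bounds on all potential senders, and for the regime $\alpha_k\le 1$ the coupling is not contractive slot-by-slot, so one cannot simply argue that the largest reputation never grows. I would break the coupling by separating senders. If the sender $u$ satisfies $H_u(t)\le b_u$, where $b_k\defequiv\frac{1}{\alpha_k}(V\nu_k+x_k^{max})$ is the access-point threshold, then the weight inequality yields $\alpha_kH_k(t)\le (V\nu_k+x_k^{max})+b_u$, a bound of order $V$ on $H_k(t)$; thus any user served by a low-reputation peer is itself bounded. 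The only escape is a chain of ever-higher-reputation peers. Here I would invoke two structural facts: users never transmit to access points, so $\sum_k\sum_{u\in\script{K}}\mu_{uk}(t)f_{uk}(t)=\sum_k y_k(t)$ (every user-to-user packet is a reception for one user and a transmission for another), which makes the peer contribution to the aggregate drift of $\sum_k H_k(t)$ equal to $\sum_{u,k}\mu_{uk}(t)f_{uk}(t)(\alpha_k-1)\le 0$ for $\alpha_k\le 1$; and $\beta_k>0$, which drains every nonempty reputation queue at a strictly positive rate. So reputation can be injected only by access points, and that injection is throttled above the threshold.

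I would then finish in one of two equivalent ways. The first is an iterated-threshold induction: a user above an $m$-th threshold (all of order $V$) must be fed by a peer above the $(m-1)$-th threshold, and since there are only $K$ users either the chain terminates at a low-reputation feeder — giving a finite $O(V)$ bound with $K$-dependent constants — or it revisits a user, forming an all-high cluster. The second handles such clusters directly by a Lyapunov argument on the aggregate reputation of the currently-high users, in which the $\sum\beta_k>0$ drain strictly dominates the bounded cross-boundary inflow (itself controlled by the low-to-high bound above). Either way the finite number $K$ of users guarantees that the resulting $C_1,C_2$ are finite and independent of $V$. The fussiest point, and where I expect most of the care to be needed, is the interaction with the $\max[\,\cdot\,,0]$ in \eqref{eq:h-update}: the claimed $\beta_k$ drain is realized only while $H_k(t)>0$, so I must check that all the thresholds in play are large enough — as they are, being $\Theta(V)$ — that no relevant queue empties during the asserted decrease.
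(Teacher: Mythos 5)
Your treatment of the $Q_k(t)$ queues (via Lemma \ref{lem:q-bound}) and your two pointwise observations --- that an access point never serves user $k$ once $H_k(t) > \frac{1}{\alpha_k}[V\nu_k + x_k^{max}]$, and that a peer transmission $u \to k$ requires $Q_k(t) + H_u(t) - \alpha_k H_k(t) \ge 0$ --- are correct and match facts the paper also establishes. But the proof does not close: both of your proposed finishing arguments are sketches whose unresolved steps are exactly where the difficulty sits. For the iterated-threshold induction, the feeding events in a chain occur at \emph{different} times, so knowing that $k$ was fed at time $\tau$ by a sender above the $(m-1)$-th threshold does not let you accumulate simultaneously-high users as you walk the chain backward; and, as you concede, the chain can cycle (two users alternately feeding each other), which throws you onto the cluster argument. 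For the cluster argument, the set of ``currently-high'' users is time-varying: each time a user crosses the threshold into the high set, the aggregate reputation of that set jumps by the user's entire queue value, which is $\Theta(V)$, and these entries can recur indefinitely. The claim that the $O(1)$-per-slot drain $\sum_k \beta_k$ ``strictly dominates the bounded cross-boundary inflow'' is therefore not established --- the inflow is not bounded independently of $V$, and switching to an excess-over-threshold potential reintroduces the non-contractive coupling you yourself identified (a sender just above its threshold can feed a receiver whose excess is $\Theta(V)$ larger, since $\alpha_k \le 1$). Finally, your argument leans on $\alpha_k \le 1$ (to make the peer contribution $\sum_{u,k}\mu_{uk}f_{uk}(\alpha_k - 1)$ nonpositive), which is \emph{not} a hypothesis of the lemma; the lemma assumes only $\nu_k > 0$, $\beta_k > 0$, and zero initial backlog.

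The missing idea, and the route the paper takes, is to never analyze the scheduling structure at all: instead, exploit the fact that the algorithm \emph{minimizes} the last three terms on the right-hand side of \eqref{eq:dpp}, and compare against the feasible all-zero action $\tilde\mu_{kb}(t) = \tilde\gamma_k(t) = 0$. Under the all-zero action the $H$-term collapses to $-\sum_k H_k(t)\beta_k$ and the $Q$-term to zero, giving $\Delta(t) \le B + VC_0 - \beta_{min}\norm{\bv{H}(t)}$ with $C_0 \defequiv \sum_k[\phi_k(x_k^{max}) - \phi_k(0)]$ and $\beta_{min} \defequiv \min_k \beta_k$. Since $\Delta(t) = \frac{1}{2}\norm{\bv{\Theta}(t+1)}^2 - \frac{1}{2}\norm{\bv{\Theta}(t)}^2$ and Lemma \ref{lem:q-bound} gives $\norm{\bv{\Theta}(t)} \le Q_{max}\sqrt{K} + \norm{\bv{H}(t)}$, the vector norm $\norm{\bv{\Theta}(t)}$ is strictly decreasing whenever it exceeds $\frac{B+VC_0}{\beta_{min}} + Q_{max}\sqrt{K}$; adding the maximal one-slot increase $x_{max}\sqrt{2K}$ yields the $C_1 + C_2V$ bound. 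This single drift comparison replaces all of your chain/cluster bookkeeping, handles the $\max[\cdot,0]$ floors automatically (they are already absorbed in Lemma \ref{lem:dpp}), and requires no restriction to $\alpha_k \le 1$. If you want to salvage your combinatorial approach, the burden is to produce a potential function or counting argument that survives time-varying cluster membership and feeding cycles --- that is a genuinely nontrivial piece of work that your proposal currently asserts rather than proves.
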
 
 
The constants $C_1$ and $C_2$ are given in the proof.

 \begin{proof} 
 See Appendix.
 \end{proof}

 \subsection{Constraint Satisfaction} 
 
 The deterministic queue bounds derived in the previous subsections ensure that 
 the time average tit-for-tat constraints \eqref{eq:q1}, as well as the auxiliary variable 
 constraints \eqref{eq:q2}, are satisfied on every sample path, regardless of whether or not
 the topology state process $\omega(t)$ is ergodic. This is shown in the next lemma. 
 Assume there are constants $Q_k^{max}$ and $H_k^{max}$ such that:  
 \begin{eqnarray} 
 0 \leq Q_k(t) \leq Q_k^{max} \: \: \forall t \in \{0, 1, 2, \ldots\} \label{eq:qmax}  \\
 0 \leq H_k(t) \leq H_k^{max } \: \: \forall t \in \{0, 1, 2, \ldots\}  \label{eq:hmax}
 \end{eqnarray} 
 Recall that Lemmas \ref{lem:q-bound} and \ref{lem:h-bound} ensure this holds 
 for $Q_k^{max} = V\nu_k + x_k^{max}$ and $H_k^{max} = C_1 + C_2V$ whenever queue backlogs
 are initially 0. 
 
 \begin{lem} If \eqref{eq:qmax} and \eqref{eq:hmax} hold, then for any slot $t \in \{0, 1, 2, \ldots\}$ 
 and any positive integer $T$: 
 
 (a) The tit-for-tat behavior over the interval $\tau \in \{t, \ldots, t+T-1\}$ satisfies the following
 for all $k \in \script{K}$: 
 \begin{eqnarray}
 \frac{1}{T}\sum_{\tau=t}^{t+T-1}\left[\alpha_kx_k(\tau) - \beta_k -  y_k(\tau)\right] \leq \frac{H_k^{max}}{T}  \label{eq:h-timeav} 
 \end{eqnarray}
 
 (b) The auxiliary variable constraints satisfy for all $k \in \script{K}$: 
 \begin{equation} \label{eq:q-timeav} 
  \frac{1}{T}\sum_{\tau=t}^{t+T-1}\left[\gamma_k(\tau) - x_k(\tau)\right] \leq \frac{Q_k^{max}}{T} 
  \end{equation} 
 \end{lem}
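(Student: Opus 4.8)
The plan is to prove both parts by the standard telescoping-sum argument for virtual queues, exploiting the fact that the projection onto the non-negative reals in each queue update can only increase the queue value. The single observation that drives everything is $\max[z,0] \geq z$ for every real $z$, which converts each queue recursion into a correctly-oriented one-slot inequality whose right-hand side telescopes under summation.

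For part (a), I would begin with the reputation-queue dynamics \eqref{eq:h-update}. Dropping the projection gives, for each $k \in \script{K}$ and each slot $\tau$, the inequality $H_k(\tau+1) \geq H_k(\tau) + \alpha_k x_k(\tau) - \beta_k - y_k(\tau)$. Rearranging isolates the tit-for-tat quantity as $\alpha_k x_k(\tau) - \beta_k - y_k(\tau) \leq H_k(\tau+1) - H_k(\tau)$. Summing over $\tau \in \{t, \ldots, t+T-1\}$ collapses the right-hand side to $H_k(t+T) - H_k(t)$. Applying the upper bound $H_k(t+T) \leq H_k^{max}$ from \eqref{eq:hmax} together with $H_k(t) \geq 0$, then dividing by $T$, yields exactly \eqref{eq:h-timeav}.

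For part (b), I would repeat the identical argument on the data-queue dynamics \eqref{eq:q-update}. The same projection inequality gives $\gamma_k(\tau) - x_k(\tau) \leq Q_k(\tau+1) - Q_k(\tau)$; summing over the window telescopes the right side to $Q_k(t+T) - Q_k(t)$, and the bounds $Q_k(t+T) \leq Q_k^{max}$ from \eqref{eq:qmax} and $Q_k(t) \geq 0$ deliver \eqref{eq:q-timeav} after dividing by $T$.

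There is no substantial obstacle here; the only point requiring care is recognizing that $\max[z,0] \geq z$ is precisely what orients each inequality so that the subsequent summation collapses and the uniform queue bounds \eqref{eq:qmax}--\eqref{eq:hmax} control the averaged constraint violation by a term of order $1/T$. I would also emphasize in the writeup that nothing in this argument invokes ergodicity of $\omega(t)$ or the existence of time-average limits: it is a pure sample-path statement, which is exactly why the \emph{deterministic} queue bounds supplied by Lemmas \ref{lem:q-bound} and \ref{lem:h-bound} are the appropriate hypotheses \eqref{eq:qmax}--\eqref{eq:hmax}.
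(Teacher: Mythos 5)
Your proposal is correct and follows essentially the same argument as the paper: drop the $\max[\cdot,0]$ projection to get the one-slot inequality, telescope the sum over the window, and bound $H_k(t+T)-H_k(t)$ (respectively $Q_k(t+T)-Q_k(t)$) by $H_k^{max}$ (respectively $Q_k^{max}$) using the nonnegativity and uniform bounds \eqref{eq:qmax}--\eqref{eq:hmax}. The paper's proof is identical in substance, treating part (b) as ``proved similarly,'' just as you do.
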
 
 
 \begin{proof}
From \eqref{eq:h-update} we have for any slot $\tau \in \{0, 1, 2, \ldots\}$:  
\begin{eqnarray*}
H_k(\tau+1) \geq H_k(\tau) + \alpha_kx_k(\tau) - \beta_k - y_k(\tau)
\end{eqnarray*}
Summing the above over $\tau \in \{t, t+1, \ldots, t+T-1\}$ gives: 
\begin{eqnarray*}
H_k(t+T) - H_k(t) \geq \sum_{\tau=t}^{t+T-1}[\alpha_kx_k(\tau) - \beta_k - y_k(\tau)]
\end{eqnarray*}
However: 
\[ H_k(t+T) - H_k(t) \leq H_k^{max} \]
Combining the above two inequalities and dividing by $T$ yields \eqref{eq:h-timeav}.  The inequality
\eqref{eq:q-timeav} is proved similarly. 
 \end{proof} 
 
 Thus, even if the limits on the left-hand-sides of the inequalities 
 \eqref{eq:h-timeav} and \eqref{eq:q-timeav} 
 do not converge as $T\rightarrow\infty$ (possibly due to non-ergodic mobility),  
 their $\limsup$ 
 must be less than or equal to $0$.

\section{Extension to Finite File Sizes} \label{section:finite} 

Now suppose the files requested by users have finite sizes.  Suppose each user requests
at most one file at a time.  We say a user is in the \emph{active} state if it is requesting a file, and
is in the \emph{idle} state if it does not have any file requests.  
For each  $k \in \script{K}$, define
$A_k(t)$ to be 1 if user $k$ is active on slot $t$, and $0$ else.  If $A_k(t) =1$, define $\script{F}_k(t)$ as the 
set of devices in $\script{N}$ that have the currently requested file of user $k$.   Define $\script{F}_k(t)$ to be 
the empty set $\{\}$ if the user is not active on slot $t$.  Define $D_k(t)$ as the number
of additional required packets for user $k$ to complete its file request (where $D_k(t) >0$ if and only if $A_k(t)=1$). 
When an active 
user finishes downloading all packets of its requested file on some slot $t$, it goes to the idle state, so that 
$A_k(t+1)=D_k(t+1)=0$, and $\script{F}_k(t+1) = \{\}$.   

We can naturally extend the algorithm developed in the previous section to this case (although in this more
general case the algorithm is a heuristic).  The only 
change is that $\script{F}_b$ in \eqref{eq:fab}  is changed to $\script{F}_b(t)$. 
The algorithm proceeds exactly
as before, still updating queues and making all $\gamma_k(t)$ and $\mu_{ab}(t)$ decisions the same
way for all users on every slot, regardless of whether users are active or idle.  Of course, the $f_{ab}(t)$
parameters in \eqref{eq:transmission-alg} and in the receive and send equations 
\eqref{eq:xk}, \eqref{eq:yk} will remove any transmission link $(a,b)$ from consideration if user $b$
is idle.  However, idle users can still participate in data delivery to other users. 
If one wants to model a user
$k$ that wishes to turn its peer-delivery functionality ``off,'' one can use a modified 
set of transmission options that sets all outgoing links from user $k$ to have rate $0$. 
Intuitively, the algorithm will behave well, with performance close to that suggested by the infinite file size assumption, when file sizes are large.

\section{Simulation} \label{section:simulation} 

We simulate the algorithm on the cell-partitioned network structure of 
Section \ref{section:example-network-structure}, with $K=50$ users and  a single base station as access point. 
The users move according to a Markov random walk on a $4 \times 4$ grid with 16 subcells.  We use
utility functions $\phi_k(x)$ given by \eqref{eq:prop-fair-approx} with $\nu_k=1$.  Each cell can support at most 
one user-to-user packet transmission per slot. The base station can transmit to at most one user $k$ per slot,
with transmission rate $S_k(t)$ that is independent over slots and across users with $Pr[S_k(t)=0] = Pr[S_k(t)=1] = Pr[S_k(t)=2] = 1/3$.  We simulate over $10^6$ slots.   On slot $t=0$, we assign each user $k$ 
a desired file that is independently
in the other users with probability $p=0.05$, which establishes the $\script{F}_k$ sets.  These sets are held fixed
for the first third of the simulation, and then independently drawn again with a larger probability $p=0.1$ and 
held fixed for the second third.  New files are again drawn at the beginning of the final third of the 
simulation, with $p=0.07$.  

Fig. \ref{fig:samplepath-throughput}
plots the resulting throughput components from the base station traffic and 
peer-to-peer traffic separately, using $V=10$, $\alpha_k=\alpha = 0.5$, $\beta_k=0.05$, $x_k^{max}=3$. 
Even though there are only an average of 50/16 = 3.125 users per cell, 
and in the first third of the simulation there is only a $5\%$ chance that a given user has the file desired
by another user, the peer-to-peer traffic is still more than twice that of the base station alone.  This
further increases in the middle of the simulation when the file availability probability jumps to $10\%$.  
Fig. \ref{fig:samplepathQ} shows that the value of $Q_k(t)$ never exceeds 10 packets for any user $k$
(recall that the worst-case guarantee from Lemma \ref{lem:q-bound}
is $Q_k(t) \leq V+3 = 13$ packets).  All tit-for-tat constraints
were satisfied, with $H_k(t) \leq 24.6$ for all $k\in\script{K}$ and all $t$. 

Figs. \ref{fig:throughputV} and \ref{fig:QV} explore the throughput-backlog tradeoff with $V$ (one can also 
plot the throughput-utility with $V$ to see a similar convergence as in Fig. \ref{fig:throughputV}). 
Fig. \ref{fig:throughputV} also treats the case when the tit-for-tat constraint is made more stringent ($\alpha = 0.75$),
in which case throughput is reduced.  
There was no observed change in average queue backlog for $\alpha = 0.5$ and $\alpha = 0.75$ (the 
simulated curves look like one curve in Fig. \ref{fig:QV}. 

\begin{figure}[htbp]
   \centering
   \includegraphics[height=1.9in, width=3.5in]{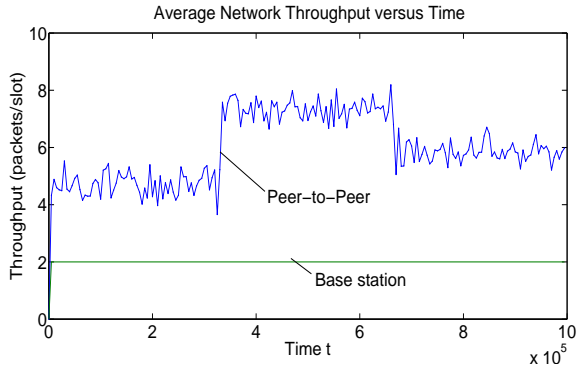} 
   \caption{Average throughput per-user versus time.}
   \label{fig:samplepath-throughput}
\end{figure}
 
 \begin{figure}[htbp]
   \centering
   \includegraphics[height=1.9in, width=3.5in]{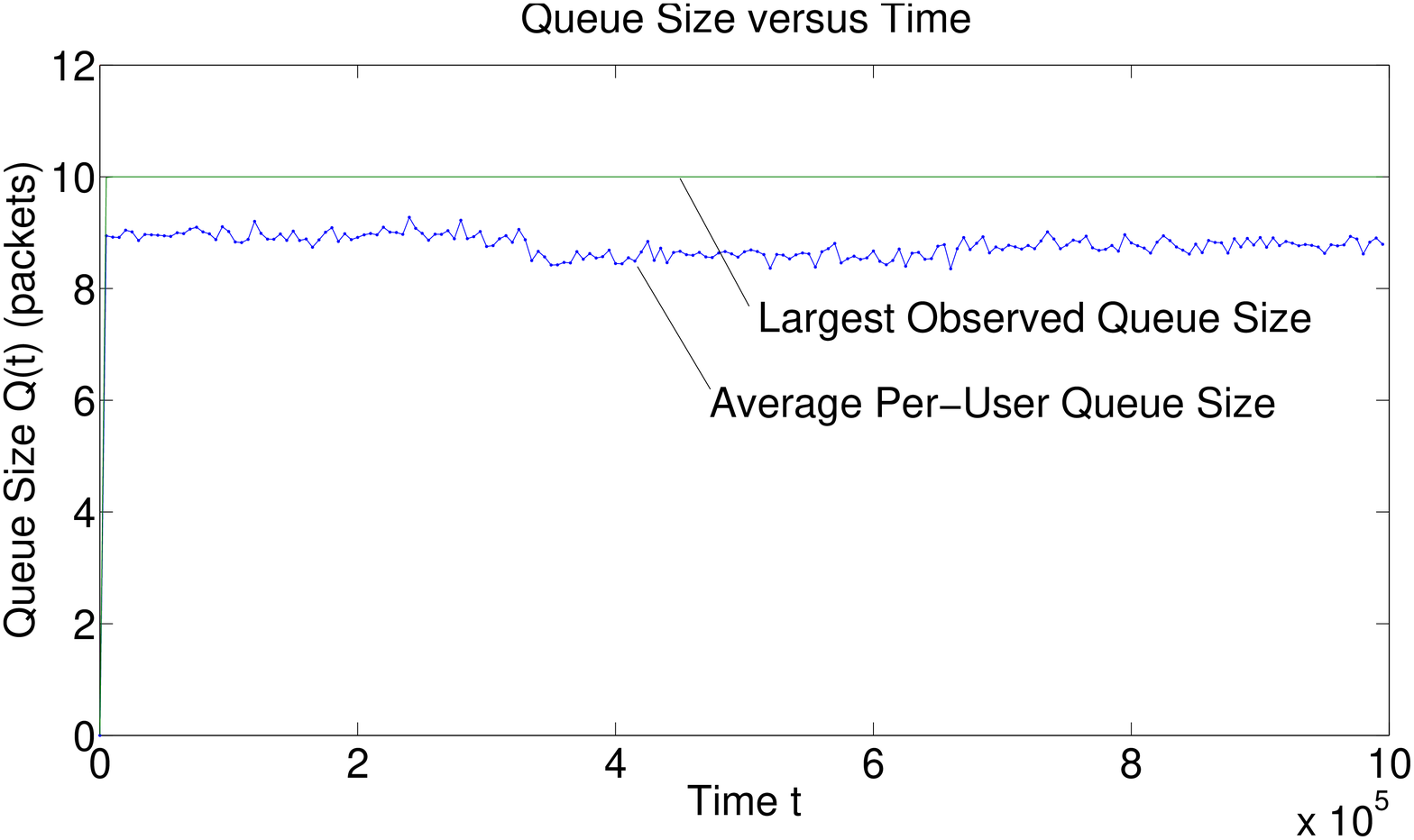} 
   \caption{Average and worst-case queue backlog per user versus time.}
   \label{fig:samplepathQ}
\end{figure}

\begin{figure}[htbp]
   \centering
   \includegraphics[height=1.9in, width=3.5in]{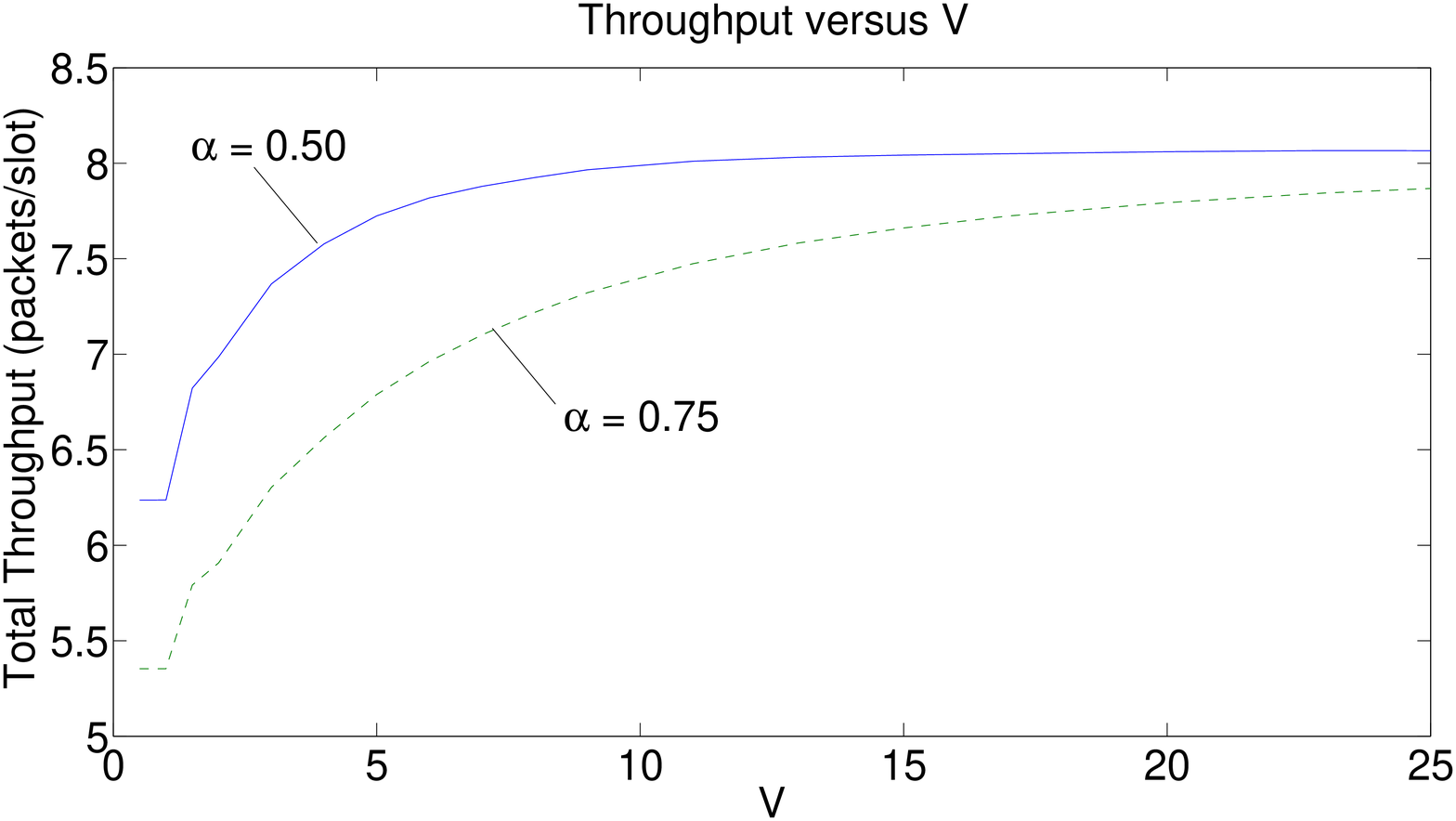} 
   \caption{Throughput versus $V$ for $\alpha = 0.5$ and $\alpha = 0.75$.}
   \label{fig:throughputV}
\end{figure}

\begin{figure}[htbp]
   \centering
   \includegraphics[height=1.9in, width=3.5in]{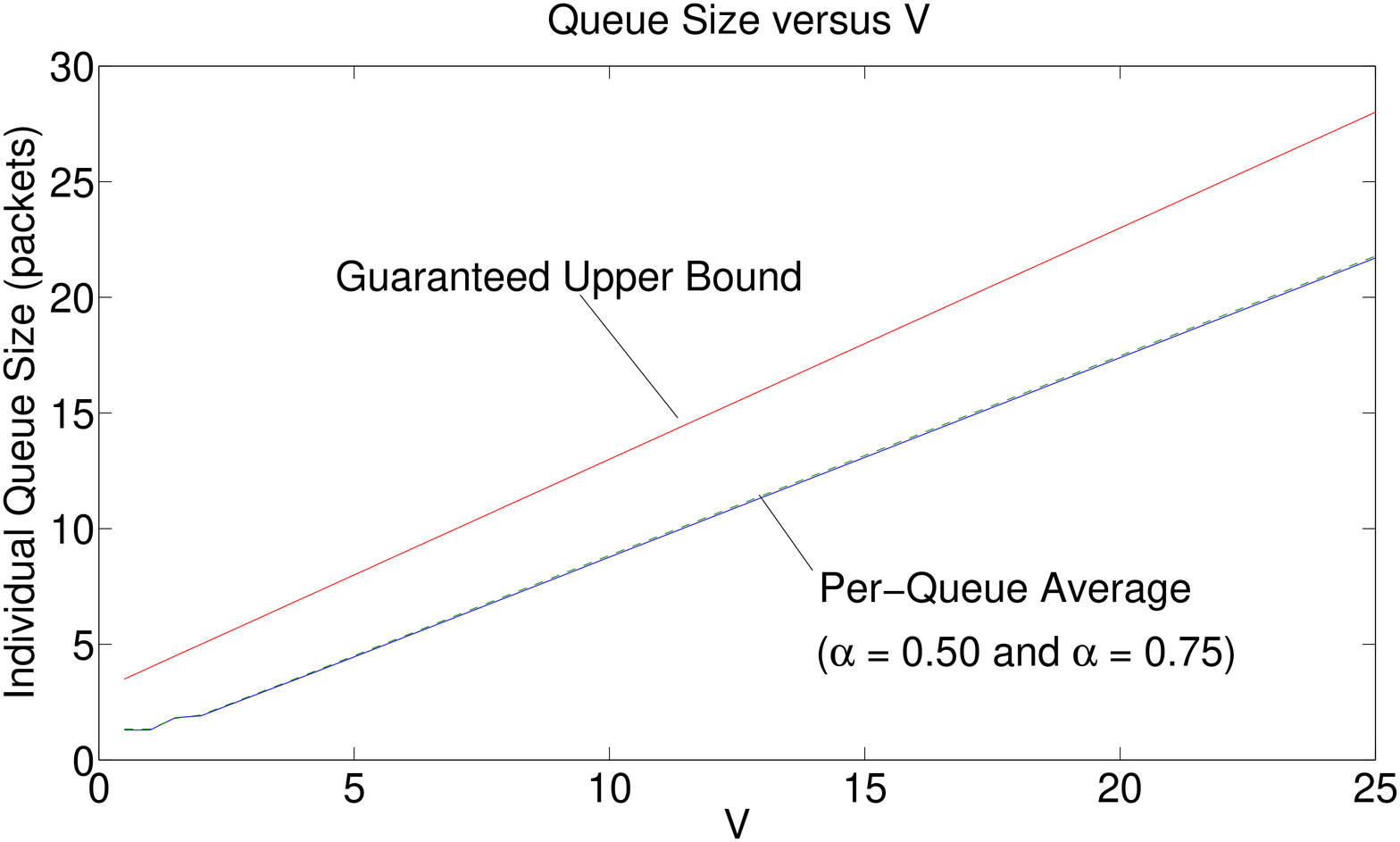} 
   \caption{Queue backlog versus $V$, demonstrating the $O(V)$ behavior.  
   The results for $\alpha=0.5$ and $\alpha=0.75$ are right on top
   of each other. The upper bound $V+3$ is also plotted.}
   \label{fig:QV}
\end{figure}

\section{Conclusions} 

This paper develops a simple peer-to-peer scheduling algorithm for a mobile wireless
network. Users can opportunistically grab packets from their peers who already have the desired
file in their cache.  This capability significantly increases the throughput capabilities in the network 
in comparison to a network where users can only download data from a base station.  Our simulations
demonstrate these gains, and illustrate that the algorithm is robust to non-ergodic events.  Analytically, 
we have shown that if each user requires a single file with infinite length, then the algorithm 
can achieve throughput 
utility that is arbitrarily close to optimality.  Distance to optimality depends on a $V$ parameter that 
also affects an $O(V)$ tradeoff in virtual queue sizes.  Our model embeds tit-for-tat constraints directly 
into the stochastic network optimization problem.  This incentivizes participation and naturally 
leads to an algorithm with \emph{reputation queues}.  Peer-to-peer links between 
users are favored according to the \emph{differential reputation} of the users.  Finally, while our 
results were developed for wireless scenarios, we note that 
the algorithm is general and the techniques can be used in other
contexts, such as in wireline computer networks and mixed wireless and wireline networks.

\section*{Appendix --- Proof of Lemma \ref{lem:h-bound}} 

Here we prove Lemma \ref{lem:h-bound}.  We state the lemma again for convenience: 

\emph{Lemma:} If all utility functions have right-derivatives bounded by 
 finite constants $\nu_k>0$, if $\beta_k>0$ for all $k \in \script{K}$, and if initial backlog satisfies
$Q_k(0) = H_k(0) =0$ for all $k \in \script{K}$, then there are finite constants $C_1$ and $C_2$,
both independent of $V$, such that: 
\begin{eqnarray*}
\norm{\bv{\Theta}(t)} \leq C_1 + C_2V \: \: \forall t \in \{0, 1, 2, \ldots\} 
\end{eqnarray*}
where $\norm{\bv{\Theta}(t)}$ is defined: 
\[ \norm{\bv{\Theta}(t)} \leq \sqrt{\sum_{k\in\script{K}}H_k(t)^2 + \sum_{k\in\script{K}}Q_k(t)^2} \]
 
 The constants $C_1$ and $C_2$ are explicitly computed in \eqref{eq:c1} and \eqref{eq:c2} below, 
 in terms of the $B$ constant from Lemma \ref{lem:dpp}. 
 
 \begin{proof}  
Because our algorithm makes decisions for $(\mu_{kb}(t))$ and $\gamma_k(t)$ 
that minimize the last three terms in the right-hand-side of \eqref{eq:dpp} over all alternative
feasible decisions, they give a value that is less than or equal to the value when the corresponding 
terms use the alternative feasible decisions $\tilde{\gamma}_k(t) = \tilde{\mu}_{kb}(t) = 0$.  Thus: 
\begin{eqnarray*}
\Delta(t) - V\sum_{k\in\script{K}}\phi_k(\gamma_k(t)) &\leq& B(t) - V\sum_{k\in\script{K}} \phi_k(0) \\
&& - \sum_{k\in\script{K}} H_k(t)\beta_k \\
&\leq& B - \beta_{min}\norm{\bv{H}(t)} 
\end{eqnarray*}
where $B$ is the constant that upper bounds $B(t)$ from Lemma \ref{lem:dpp} 
for all $t$, $\beta_{min} \defequiv \min_{k\in\script{K}}\beta_k$,
and we have used the fact that $\sum_{k\in\script{K}} H_k(t) \geq \norm{\bv{H}(t)}$.  Now define 
$C_0 \defequiv \sum_{k\in\script{K}}[\phi_k(x_k^{max})- \phi_k(0)]$.  We have for all $t$:
\[ \Delta(t) \leq B + VC_0 - \beta_{min}\norm{\bv{H}(t)} \]
Note that: 
\[ \Delta(t) = \frac{1}{2}\norm{\bv{\Theta}(t+1)}^2 - \frac{1}{2}\norm{\bv{\Theta}(t)}^2 \]
Thus: 
\begin{equation} \label{eq:follows}
 \norm{\bv{\Theta}(t+1)}^2 - \norm{\bv{\Theta}(t)}^2 \leq 2(B+VC_0) - 2\beta_{min}\norm{\bv{H}(t)} 
 \end{equation} 
Note by Lemma \ref{lem:q-bound} 
that for all slots $t$, we have $0 \leq Q_k(t) \leq Q_{max}$, where: 
\[ Q_{max} \defequiv \nu_{max}V + x_{max} \]
where $\nu_{max} \defequiv \max_{k\in\script{K}} \nu_k$ and $x_{max} \defequiv \max_{k\in\script{K}} x_k^{max}$. 
Thus, for any slot $t$: 
\begin{eqnarray}
\norm{\bv{\Theta}(t)} &\leq& \norm{\bv{Q}(t)} + \norm{\bv{H}(t)} \nonumber \\
&\leq& Q_{max}\sqrt{K} + \norm{\bv{H}(t)} \label{eq:combine} 
\end{eqnarray}
Now suppose that on slot $t$, we have: 
\begin{equation} \label{eq:suppose} 
 \norm{\bv{\Theta}(t)} > \frac{B + VC_0}{\beta_{min}} + Q_{max}\sqrt{K} 
 \end{equation} 
 Combining this with \eqref{eq:combine} shows that if \eqref{eq:suppose} holds, then: 
  \begin{equation} \label{eq:follows2} 
   \norm{\bv{H}(t)} > \frac{B + VC_0}{\beta_{min}} 
   \end{equation} 
It follows that if \eqref{eq:suppose} holds, then (by combining \eqref{eq:follows} and \eqref{eq:follows2}): 
\[ \norm{\bv{\Theta}(t+1)}^2 - \norm{\bv{\Theta}(t)}^2 < 0 \]
Thus, $\norm{\bv{\Theta}(t)}$ cannot increase if \eqref{eq:suppose} holds on slot $t$. 
Because the initial queue backlog is less than the threshold given in  \eqref{eq:suppose}, 
it follows that for all $t$: 
\[ \norm{\bv{\Theta}(t)} \leq \frac{B + VC_0}{\beta_{min}} + Q_{max}\sqrt{K} + g \]
where $g$ is defined as the maximum possible increase in $\norm{\bv{\Theta}(t)}$ in one slot.
Because both $Q_k(t)$ and $H_k(t)$ can increase by at most $x_{max}$ in one slot, we have
$g \leq x_{max}\sqrt{2K}$.
Thus, for all slots $t$ we have: 
\begin{eqnarray*}
\norm{\bv{\Theta}(t)} &\leq& \frac{B + VC_0}{\beta_{min}} + (V\nu_{max} + x_{max})\sqrt{K}  \\
&& + x_{max}\sqrt{2K} \\
&\leq& C_1 + C_2V 
\end{eqnarray*}
where: 
\begin{eqnarray}
C_1 &\defequiv& B/\beta_{min} + x_{max}(\sqrt{K} + \sqrt{2K})  \label{eq:c1} \\
C_2 &\defequiv& C_0/\beta_{min} + \nu_{max}\sqrt{K} \label{eq:c2} 
\end{eqnarray}
\end{proof} 
   
\bibliographystyle{unsrt}
\bibliography{../../latex-mit/bibliography/refs}
\end{document}